\newcommand{\uselipics}{no}
\newcommand{\iflipics}[2]{\ifthenelse{\equal{\uselipics}{yes}}{#1}{#2}}
\newcommand{\onlylipics}[1]{\iflipics{#1}{}}
\def\cqedsymbol{\ifmmode$\lrcorner$\else{\unskip\nobreak\hfil
\penalty50\hskip1em\null\nobreak\hfil$\lrcorner$
\parfillskip=0pt\finalhyphendemerits=0\endgraf}\fi}
\newcommand{\Oh}{\ensuremath{\mathcal{O}}}
\newcommand{\bag}{\beta}
\newcommand{\rotcat}[2]{\hline\multirow{#1}{*}{#2}}
\numberwithin{equation}{section}
\newtheorem{theorem}{Theorem}[section]
\newtheorem{lemma}[theorem]{Lemma}
\theoremstyle{definition}
\title{Finding Hamiltonian Cycle in Graphs of Bounded Treewidth:
  Experimental Evaluation%
\thanks{Supported by the ``Recent trends in kernelization: theory and experimental evaluation'' project, carried out within the Homing programme of the Foundation for Polish Science co-financed by the European Union under the European Regional Development Fund.}}
\author{Micha\l{} Ziobro\thanks{Theoretical Computer Science Department, Faculty of Mathematics and Computer Science, Jagiellonian University, Krak\'{o}w, Poland. \texttt{michal.18.ziobro@student.uj.edu.pl}}
  \and
    Marcin Pilipczuk\thanks{Institute of Informatics, University of Warsaw, Poland. 
      \texttt{malcin@mimuw.edu.pl}}}
\date{}
\begin{document}

\maketitle

\begin{abstract}
The notion of treewidth, introduced by Robertson and Seymour in their seminal Graph Minors series, turned out to have tremendous impact on graph algorithmics.
Many hard computational problems on graphs turn out to be efficiently solvable in graphs of bounded treewidth: graphs that can be sweeped with separators
of bounded size. These efficient algorithms usually follow the dynamic programming paradigm.

In the recent years, we have seen a rapid and quite unexpected development of involved techniques for solving various computational problems
in graphs of bounded treewidth. 
One of the most surprising directions is the development of algorithms for connectivity problems that have only single-exponential
dependency (i.e., $2^{\Oh(t)}$) on the treewidth in the running time bound, as opposed to slightly superexponential (i.e., $2^{\Oh(t \log t)}$)
  stemming from more naive approaches.
In this work, we perform a thorough experimental evaluation of these approaches in the context of one of the most classic connectivity problem,
  namely \textsc{Hamiltonian Cycle}. 
\end{abstract}

\section{Introduction}

The problem of finding {\sc Hamiltonian Cycle} in graph is one of the oldest and best known $\mathcal{NP}$-complete problems.
It was intensely studied together with its more generic optimization version {\sc Traveling Salesman Problem}. Early and important
result on this problem was the dynamic programming algorithm invented independently by Bellman~\cite{bellman1958combinatorial} and Held and Karp~\cite{held1962dynamic},
running in time $O(2^n n^2)$. The exponential factor of this running time bound remains the best known for deterministic
algorithms up to today, and a faster randomized Monte Carlo algorithm has been shown only recently by Bj{\"o}rklund \cite{bjorklund2010determinant}.
Faster algorithms were also obtained for some special cases, like graphs with bounded degree \cite{cygan2013fast,bjorklund2010trimmed} or claw-free graphs \cite{broersma2009fast}.
		
An important class of graphs in which many combinatorial problems can be solved more efficiently, are graphs of bounded \emph{treewidth}.
Treewidth, introduced by Robertson and Seymour in their Graph Minors project~\cite{robertson1984graph}, measures how the input graph resembles a tree,
 or how can be covered be a set of bounded-sized bags organized in tree like structure which we call \emph{tree decomposition}.
 It has proven to be very useful for dealing with $\mathcal{NP}$-hard problems;
 for example, given an $n$-vertex graph $G$ and its tree decomposition of width $t$, one can solve the \textsc{Maximum Independent Set} problem in $G$
in time $2^t \cdot t^{\Oh(1)} \cdot n$. We refer to~\cite{cygan2015parameterized} for more examples of algorithms on graphs of bounded treewidth.

Essentially every algorithm for graphs of bounded treewidth follows the paradigm of dynamic programming: it gradually (in a bottom-to-top fashion on the tree decomposition)
builds partial solutions in subgraphs of the input graph. Using the fact that a bag in a tree decomposition is a separator, in many combinatorial problems it suffices to keep
only a bounded (by a function of the width of the decomposition) number of partial solutions in each step of the algorithm.
To illustrate this concept, consider a separation $(A,B)$ in a graph $G$ with $S = A \cap B$ (i.e., $A, B \subseteq V(G)$ are two sets with $A \cup B = V(G)$ and no edge between $A \setminus B$
and $B \setminus A$), and think of a dynamic programming algorithm that processed already the graph $G[A]$, but has not yet touched $B \setminus A$.
Observe that a partial solution $X \subseteq A$ to the \textsc{Maximum Independent Set} problem interacts with $B \setminus A$ only via the set $S$.
Consequently, it suffices to store, for every $X_S \subseteq S$, an independent set $X \subseteq A$ of maximum possible size satisfying $X \cap S = X_S$.
If the separator $S$ is of size at most $t$, it leads to $2^t$ bound on the size of the memoization table in the dynamic programming algorithm.

In the \textsc{Hamiltonian Cycle} problem, the natural state space for the dynamic programming algorithm is a bit more complex. A partial solution in $G[A]$
would be a set of vertex-disjoint paths $\mathcal{P}$ that all have endpoints in $S$ and together visit every vertex of $A \setminus B$. 
To complete the partial solution $\mathcal{P}$ to a Hamiltonian cycle $H$ in $G$, it seems essential to remember not only which vertices
of $S$ are visited by $\mathcal{P}$ and which are the endpoints of paths in $\mathcal{P}$, but also how the paths of $\mathcal{P}$ pair up their endpoints in $S$
(see Figure~\ref{fig:sep}).
This last piece of information leads to $2^{\theta(t \log t)}$ states for separator $S$ of size $t$.

\begin{figure}[tb]
\includegraphics[width=\textwidth]{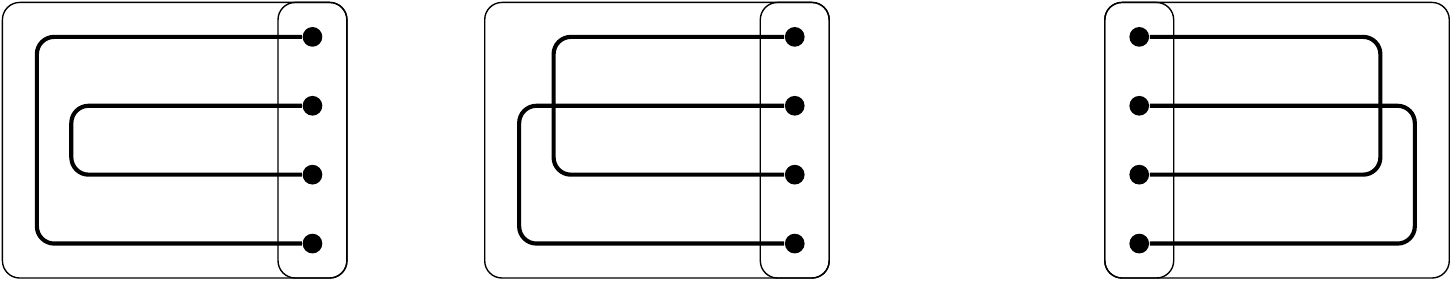}
\caption{A separator $S$ with two possible partial solutions on the left.
  Only the first one forms a Hamiltonian cycle with the partial solution
    on the right, despite that in all of them the vertices on the separator
    have degree $1$.}\label{fig:sep}
    \end{figure}

Up to late 2010, almost all known algorithms for combinatorial problems in graphs of bounded treewidth follow the naive approach outlined above,
and researchers' effort focused mostly on speeding up computations in the so-called \emph{join nodes} of the decomposition (see e.g.~\cite{rooij}).%
\footnote{A join node of a decomposition corresponds to a node of the underlying tree of the tree decomposition of degree at least $3$; intuitively, it corresponds
to a bounded-size separator that splits the graph into more than $2$ pieces, and in the dynamic programming algorithm one needs to merge information from at least
two of such pieces.}
In 2010, Lokshtanov, Marx, and Saurabh proved that many such algorithms have optimal dependency on treewidth~\cite{LMS} (under strong complexity assumptions)
and provided a framework for proving similar lower bounds for complexities of the type $2^{\theta(t \log t)}$~\cite{LMS2}. 
However, providing such a tight lower bound for the connectivity problems such as \textsc{Hamiltonian Cycle} in graphs of bounded treewidth remained elusive.

Quite unexpectedly, a year after it turned out that there is a reason for this lack of progress, and a Monte Carlo algorithm with running time
$4^t n^{\Oh(1)}$ for finding a Hamiltonian cycle in a graph with a given tree decomposition of width $t$ has been reported~\cite{cygan2011solving}.
The work~\cite{cygan2011solving} introduced a framework called Cut\&Count that provided randomized single-exponential (i.e., with running time bound
of the form $2^{\Oh(t)} n^{\Oh(1)}$) algorithms for many connectivity problems in graphs of bounded treewidth.
The key idea of the Cut\&Count method is to replace the original connectivity requirement with a different counting-mod-2 task, and ensure correctness
via the Isolation Lemma~\cite{IsolationLemma}.

In following years, a good understanding of the aforementioned improvement has been obtained by Bodlaender et al~\cite{bodlaender2012solving}.
In the language of \textsc{Hamiltonian Cycle}, a linear algebra argument shows that it suffices only to keep $4^t$ partial solutions instead 
of the naive bound of $2^{\Oh(t \log t)}$; if the memoization table grows too large, an algorithm based on Gaussian elimination is able to prune
provably unnecessary states. 
Cygan et al.~\cite{cygan2013fast} provided a better basis for the Gaussian elimination step and improved the bound for the number of states
for \textsc{Hamiltonian Cycle} to $(2+\sqrt{2})^t$. Furthermore, in~\cite{cygan2013fast} a matching lower bound is shown.
Due to the linear algebraic nature of the argument, this approach has been dubbed in the literature as the \emph{rank-based approach}.

In~\cite{cygan2011solving}, an involved fast convolution algorithm has been applied to obtain the $4^t n^{\Oh(1)}$ running time bound even
in computations at join nodes. 
The need to execute Gaussian elimination in~\cite{bodlaender2012solving} and treat join nodes in a more direct fashion in both algorithms
of~\cite{bodlaender2012solving,cygan2013fast} yield worse theoretical running time bounds.
Thus, the algorithm~\cite{cygan2011solving} remains theoretically fastest in graphs of bounded treewidth to this date.

Following a recent trend in multivariate algorithmics to experimentally evaluate parameterized algorithms
(led by a growing popularity of the Parameterized Algorithms and Computational Experiments Challenge \cite{dellpace,dell2017first}),
  in this work we thoroughly evaluate the aforementioned algorithms for \textsc{Hamiltonian Cycle}.
A direct inspiration for our work is the work of Fafianie et al~\cite{fafianie2015speeding} 
that provided an experimental comparison of the naive and rank-based approaches for \textsc{Steiner Tree} (i.e., without considering the Cut\&Count approach). 
That is, in this work we include Cut\&Count and we compare the following four approaches.
\begin{description}
\item[naive] The aforementioned naive approach with $2^{\Oh(t \log t)}$ bound on the number of states.
\item[rank-based] The approach of~\cite{bodlaender2012solving}, that is, the naive approach with pruning of the state space
leading to $4^t$ size bound.
\item[rank-based with improved basis] The approach of~\cite{cygan2013fast}, that is, the rank-based approach with the improved basis
yielding the size bound $(2+\sqrt{2})^t$.
\item[Cut\&Count] The Cut\&Count algorithm of~\cite{cygan2011solving}.
\end{description}
Furthermore, as a baseline, we have compared our algorithms to the \texttt{hamiltonian\_cycle()} method of the SageMath
package~\cite{sage}. 
Behind the scenes, the SageMath
implementation uses a Traveling Salesperson Problem solver relying on Gurobi as a linear programming solver.

As observed in~\cite{cygan2011solving}, the application of the Isolation Lemma in the Cut\&Count method yields a relatively high
polynomial factor in the running time bound, but one can replace its usage with computations over a field of characteristic $2$
and randomization via the Schwartz-Zippel lemma. This replacement leads again to linear dependency on the graph size in the running time bound.
We follow this path.
However, as has been overlooked in~\cite{cygan2011solving}, the fast convolution algorithm at join nodes in the $4^t n^{\Oh(1)}$-time algorithm
does not support computations over a field of characteristic $2$, as it requires division by $2$. 
Our theoretical contribution in this paper is a method around this obstacle, essentially showing that it is sufficient to perform the convolution
over the ring of polynomials $\mathbb{Z}[x]$. This is described in Section~\ref{ss:cut-and-count} and leads to the following conclusion.
\begin{theorem}\label{thm:cc}
There exists a Monte Carlo algorithm that, given an $n$-vertex graph $G$ together with its tree decomposition of width $t$,
      solves \textsc{Hamiltonian Cycle} on $G$ in time $4^t \cdot n \cdot (t \log n)^{\Oh(1)}$.
\end{theorem}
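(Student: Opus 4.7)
The plan is to adapt the Cut\&Count framework of~\cite{cygan2011solving} to \textsc{Hamiltonian Cycle} with two twists: first, replace the Isolation Lemma by a Schwartz--Zippel style random substitution so as to retain only a polylogarithmic factor beyond the $4^t\cdot n$ main term; and second, rework the fast join-node convolution so that it no longer requires division by~$2$.

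I would first recall the standard Cut\&Count dynamic program for \textsc{Hamiltonian Cycle} on a nice tree decomposition, whose state per bag vertex is one of a constant number of markers encoding the current degree ($0$, $1$, or $2$) and, for vertices of positive degree, the side of the cut. At introduce, forget and leaf nodes the DP is straightforward and contributes only $4^t\cdot n\cdot t^{O(1)}$ in total. To avoid the costly Isolation Lemma of the original analysis I would attach random polynomial weights to vertices (and edges) and argue, via Schwartz--Zippel in a characteristic-$2$ field of suitable extension degree, that the DP outputs a nonzero value with constant probability whenever a Hamiltonian cycle exists. This brings the polynomial factor in $n$ down to linear.

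The technical heart is the join node. The $4^t\cdot t^{O(1)}$-time convolution from~\cite{cygan2011solving} uses a fast Fourier-type transform whose inverse involves a factor $1/2$ per coordinate, which is harmless over a ring in which $2$ is invertible but prohibitive in characteristic~$2$. My proposed remedy is to lift the whole join computation to $\mathbb{Z}[x]$: the indeterminate $x$ absorbs the Schwartz--Zippel randomness, while integer arithmetic keeps $2$ nonzero. The point to prove is that the quantities which must be divided by $2$ in the inverse transform always come out even, so the division is exact and the result still lives in $\mathbb{Z}[x]$. Granted this, a join runs in $4^t\cdot(t\log n)^{O(1)}$ operations on polynomials of degree $O(t\log n)$ and coefficients of bit-length $(t\log n)^{O(1)}$; summed over the $O(n)$ nodes of a nice decomposition one obtains the advertised bound.

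The main obstacle I anticipate is exactly the integrality claim above: one has to inspect the convolution of~\cite{cygan2011solving} carefully and either verify that the required $2$-divisibilities occur automatically, or slightly redesign the transform (for instance by grouping dimensions in pairs and postponing a common $2^t$ factor until the very end) so that the integer lift is correct. Once this is done, correctness of Cut\&Count transfers verbatim from the finite-field proof by reducing the final integer output modulo~$2$, and the stated running time follows from standard polynomial-arithmetic and nice tree-decomposition accounting.
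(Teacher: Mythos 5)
Your proposal matches the paper's approach essentially exactly: Cut\&Count with Schwartz--Zippel over a characteristic-$2$ field of degree $\Theta(\log n)$, and a join-node convolution lifted to $\mathbb{Z}[x]$ so that the divisions by $2$ become exact integer divisions, with reduction modulo $2$ and the defining polynomial $Q$ only at the very end (the paper's Lemma~\ref{lem:conv2}). The integrality worry you flag is discharged in the paper simply by invoking Lemma~\ref{lem:conv}, which is already stated for $\mathbb{Z}$-valued functions, applied coefficient-wise to the polynomial representatives.
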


In Section~\ref{sec:impl} we discuss implementation details of the algorithms.
Section~\ref{sec:setup} discuss experiment setup and Section~\ref{sec:results} discuss results.
We conclude in Section~\ref{sec:conc}.

\section{Theory and implementation details}\label{sec:impl}

\subsection{Tree decompositions}

For more background on tree decompositions and dynamic programming algorithms using them, we refer to~\cite{cygan2015parameterized}.
Here, we recall only the basic notions.

For a graph $G$, a \emph{tree decomposition} is a pair $(T,\beta)$ where $T$ is a tree and $\beta$ assigns to every node $t \in V(T)$
a set $\beta(t) \subseteq V(G)$ called \emph{a bag} with the following invariants: (i) for every $v \in V(G)$, the set $\{t \in V(T) | v \in \beta(t)\}$
is nonempty and connected in $T$, (ii) for every $uv \in E(G)$ there exists $t \in V(T)$ with $u,v \in \beta(t)$.
The width of the tree decomposition is the maximum size of a bag, minus one, and the treewidth of a graph is the minimum possible width of its tree decomposition.

As in multiple previous algorithms, it is convenient to describe dynamic programming algorithms on a special type of decompositions, called \emph{nice}.
A \emph{nice tree decomposition} is a rooted tree decomposition for which the bag of the root is empty and every node is of one of the following types:
\begin{description}
	\item[Leaf node] is a node $t$ with no children and $\beta(t) = \emptyset$.
	\item[Introduce vertex node] is a node $t$ with unique child $t'$ and a vertex $v$ such that $\bag(t) = \bag(t') \uplus \{v\}$.
	\item[Forget vertex node] is a node $t$ with unique child $t'$ and a vertex $v \in \bag(t')$ such that $\bag(t) = \bag(t') \setminus \{v\}$.
	\item[Join node] is a node $t$ with exactly two children $t_1$ and $t_2$ with $\bag(t) = \bag(t_1) = \bag(t_2)$.
\end{description}
For a note $t \in V(T)$, we define $\gamma_{\downarrow}(t)$ to be the union of $\beta(t')$ over $t'$ being descendants of $t$ in $T$.
Furthermore, let $G_t$ be the graph $G[\gamma_\downarrow(t)]-E(G[\beta(t)])$ (i.e., 
    we exclude the edges inside the bag $\beta(t)$).

Additionally, in our case it is convenient to precede every \textbf{forget vertex node} with a sequence of \textbf{introduce edge nodes}. That is, 
for a \textbf{forget node} $t$ with child $t'$ and forgotten vertex $v$, we take $E_{t,v}$ to be the set of edges of $G$ that connect $v$ with
vertices of $\beta(t) \setminus \{v\}$, subdivide the edge $tt'$ in $E(T)$ $|E_{t,v}|$ times, labeling the new nodes $\{t_e | e \in E_{t,v}\}$,
and set $\beta(t_e) = \beta(t')$. The graphs $G_{t_e}$ are defined as follows: if $t''$ is the unique child of $t_e$, then
$G_{t_e} = G_{t''} \cup \{e\}$.

The intuition of this step is as follows: there is a significant difference between the graphs $G_{t'}$ and $G_t$,
 namely $E(G_t) = E(G_{t'}) \cup E_{t,v}$. We split this change into $|E_{t,v}|$ steps, adding edges one by one.

All our implementations start with preparing a nice tree decomposition with the \textbf{introduce edge nodes}.

\subsection{Naive approach}
Given a note $t$ in a nice tree decomposition $(T,\beta)$, a \emph{partial solution} is a family $\mathcal{P}$
of vertex-disjoint paths in $G_t$ such that (i) every vertex of $\gamma_\downarrow(t) \setminus \beta(t)$ is visited by some path in $\mathcal{P}$, and
(ii) every path in $\mathcal{P}$ has both its endpoints in $\beta(t)$.
For a partial solution $\mathcal{P}$ at note $t$, we define the following objects:
\begin{description}
\item[a bucket $b$] is a function $b : \beta(t) \to \{0,1,2\}$ that assigns to every vertex $v \in \beta(t)$ its degree in the union of $\mathcal{P}$;
\item[a pairing $E$] is a family of disjoint two-element subsets of
$b^{-1}(1)$ that pairs up the endpoints of the same path in $\mathcal{P}$.
\end{description}
The pair $(b,E)$ is the \emph{state} of $\mathcal{P}$.
The crucial observation is that among partial solutions with the same state, it suffices to memoize only one.
Note that for a given bucket $b$ with $\ell = |b^{-1}(1)|$, there are $(\ell-1)\cdot(\ell-3)\cdot 3 \cdot 1$ possible pairings, giving a $2^{\theta(|\beta(t)| \log |\beta(t)|)}$ bound on the number of different states.

With this observation, it is straightforward to design a dynamic programming algorithm that finds a Hamiltonian cycle in time $2^{\Oh(t \log t)} n$ given a tree decomposition of the input graph
of width $t$. This is exactly the naive approach.

\subsection{Rank based approach}

The rank-based approach is strongly based on the naive one, with main change being a pruning on the number of possible pairings.
\begin{theorem}[\cite{bodlaender2012solving}]\label{thm:rank}
For a fixed node $t$ and bucket $b$, given a family $\mathcal{E}$ of pairings, one can find a subfamily $\mathcal{E}^\ast \subseteq \mathcal{E}$
of size at most $2^{|b^{-1}(1)|-1}$ with the following property: for every Hamiltonian cycle $H$ in $G$, if $\mathcal{P}$ is its intersection with $G_t$
and $(b,E)$ is the state of $\mathcal{P}$, and $E \in \mathcal{E}$, then there exists $E^\ast \in \mathcal{E}^\ast$ such that for every partial solution
$\mathcal{P}^\ast$ with state $(b,E^\ast)$, the graph $(H-E(\mathcal{P})) \cup E(\mathcal{P}^\ast)$ is a Hamiltonian cycle as well.

Furthermore, given $b$ and $\mathcal{E}$, one can assign to every $E \in \mathcal{E}$ a $2^{|b^{-1}(1)|-1}$-length 0-1 vector $v_E$
such that the family $\mathcal{E}^\ast$ is defined as the indices of any maximal independent (over $\mathbb{F}_2$) subfamily of $\{v_E | E \in \mathcal{E}\}$.
\end{theorem}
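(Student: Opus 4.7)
The plan is to reduce the statement to a purely linear-algebraic fact about an $\mathbb{F}_2$-matrix that records when two pairings \emph{compatibly close up} into a single cycle. Concretely, write $V = b^{-1}(1)$, let $\Pi$ denote the set of pairings of $V$, and define the compatibility matrix $M \in \mathbb{F}_2^{\Pi \times \Pi}$ by $M[E_1, E_2] = 1$ iff the multigraph $E_1 \cup E_2$ is a single cycle on $V$. The restriction of any Hamiltonian cycle $H$ of $G$ to the complement of $G_t$ induces a pairing $F$ on $V$, and glueing an inside partial solution $\mathcal{P}^\ast$ with state $(b, E^\ast)$ to this outside piece produces a Hamiltonian cycle of $G$ iff $M[E^\ast, F] = 1$ (the bucket $b$ takes care of covering all vertices of $\gamma_\downarrow(t)$ and fixing the degrees on $\beta(t)$). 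So the theorem reduces to: given any set of rows of $M$, find a small sub-collection whose $\mathbb{F}_2$-span contains all of them, which follows from a bound on the row rank of $M$.

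The key and hardest step is to exhibit the vectors $v_E$ witnessing $\operatorname{rank}_{\mathbb{F}_2}(M) \le 2^{|V|-1}$. My approach is a factorization $M = A A^\top$ with $A \in \mathbb{F}_2^{\Pi \times 2^{|V|-1}}$: fix a reference vertex $v_0 \in V$, index columns of $A$ by the $2^{|V|-1}$ subsets $S \subseteq V$ with $v_0 \in S$, and set $v_E[S] := A[E, S]$ to an explicit $\{0,1\}$-valued function of how the edges of $E$ cross the cut $(S, V \setminus S)$. What must be verified is the combinatorial identity
\[
  \sum_{S \ni v_0} v_{E_1}[S]\cdot v_{E_2}[S] \;\equiv\; \mathbbm{1}\bigl[\,E_1 \cup E_2 \text{ is a single cycle on } V\,\bigr] \pmod{2}.
\]
The argument tracks parities of cut crossings as one walks along the cycles of $E_1 \cup E_2$: the two pairings' edges alternate around each cycle, and a cancellation analysis shows that the sum survives precisely in the single-cycle case. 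This delicate combinatorial computation is the heart of the proof, and is essentially the matchings-connectivity lemma of~\cite{bodlaender2012solving}; picking a cleverer basis gives the refinement of~\cite{cygan2013fast}.

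Once the factorization is in hand, the remainder is routine linear algebra. Compute $\{v_E : E \in \mathcal{E}\}$ and run Gaussian elimination over $\mathbb{F}_2$ to pick $\mathcal{E}^\ast \subseteq \mathcal{E}$ whose images form a maximal independent subset; automatically $|\mathcal{E}^\ast| \le 2^{|V|-1}$. For the representation claim, take a Hamiltonian cycle $H$ with $\mathcal{P} := H \cap G_t$ of state $(b, E)$, $E \in \mathcal{E}$, and outside pairing $F$, so that $M[E, F] = 1$. Writing $v_E = \sum_{E^\ast \in S} v_{E^\ast}$ in the span yields $1 = M[E, F] = \sum_{E^\ast \in S} M[E^\ast, F]$ in $\mathbb{F}_2$, hence some $E^\ast \in \mathcal{E}^\ast$ satisfies $M[E^\ast, F] = 1$. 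Since compatibility depends only on the state and not on the particular partial solution realizing it, every $\mathcal{P}^\ast$ with state $(b, E^\ast)$ glues with the outside piece of $H$ into a Hamiltonian cycle $(H - E(\mathcal{P})) \cup E(\mathcal{P}^\ast)$, as required.
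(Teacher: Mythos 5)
First, a point of reference: the paper does not prove Theorem~\ref{thm:rank} at all --- it imports the statement from \cite{bodlaender2012solving} and only discusses how it is used. So the comparison is really against the standard proof in that source, and your architecture matches it exactly: reduce to a rank bound for the matchings-connectivity matrix $M[E_1,E_2]=1$ iff $E_1\cup E_2$ is a single cycle, exhibit a factorization $M=AA^\top$ over $\mathbb{F}_2$ with $2^{|V|-1}$ columns indexed by the cuts $(S,V\setminus S)$ with $v_0\in S$, and close with Gaussian elimination plus the gluing/exchange argument. The reduction (compatibility of $E^\ast$ with the outside pairing $F$ depends only on the state, not on the particular partial solution), the rank-to-representatives step, and the final computation $1=M[E,F]=\sum_{E^\ast\in S}M[E^\ast,F]$ are all correct as written.

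The gap sits at what you yourself call the heart of the proof. You never actually define the entries $A[E,S]$ --- ``an explicit $\{0,1\}$-valued function of how the edges of $E$ cross the cut'' is a placeholder, not a definition --- and the verification you sketch (tracking ``parities of cut crossings as one walks along the cycles,'' with the two matchings' edges ``alternating'' and a ``cancellation analysis'') is not how the identity is established and would be awkward to push through. The correct choice is $A[E,S]=1$ iff \emph{no} edge of $E$ crosses $(S,V\setminus S)$. Then $\sum_{S\ni v_0}A[E_1,S]\,A[E_2,S]$ counts the sets $S$ containing $v_0$ that are crossed by no edge of $E_1\cup E_2$, i.e.\ the unions of connected components of $E_1\cup E_2$ that include the component of $v_0$; if there are $c$ components this count is exactly $2^{c-1}$, which is odd precisely when $c=1$, i.e.\ precisely when $E_1\cup E_2$ is a single cycle. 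No alternation or cancellation argument is needed. With this definition and the two-line count supplied, your proof is complete and coincides with the one in \cite{bodlaender2012solving}; the cleverer basis of \cite{cygan2013fast} is indeed a separate refinement and is not needed for the bound claimed here.
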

In other words, for a fixed bucket $b$, it is sufficient to keep only $2^{|b^{-1}(1)|-1}$ pairings, and pruning unnecessary
pairings can be done via Gaussian elimination on a matrix with $|\mathcal{E}|$ rows and $2^{|b^{-1}(1)|-1}$ columns over the field $\mathbb{F}_2$ (the two-element field modulo $2$).

In~\cite{cygan2013fast}, Theorem~\ref{thm:rank} is improved with a different construction of vectors $v_E$ that are of length $2^{|b^{-1}(1)|/2-1}$.
Furthermore,~\cite{cygan2013fast} showed how to use the special structure of the vectors $v_E$ to avoid Gaussian elimination at \textbf{introduce/forget vertex/edge nodes}, 
yielding $(2+\sqrt{2})^p p^{\Oh(1)} n$-time algorithms for graphs with a given \emph{path} decomposition of width $p$ (i.e., without any \textbf{join nodes}).

We implement the rank-based approach both with the vector construction of~\cite{bodlaender2012solving} and the improved one
of~\cite{cygan2013fast}. Both implementations use Gaussian elimination, as it is not known how to avoid it at join nodes.

All implementations perform the same computations specific to the node type, which are straightforward in all cases.
At join nodes, the algorithm first sorts the partial solutions by buckets and then tries to match the partial solutions
only for buckets that fit each other (e.g., do not exceed the bound of $2$ on a degree of a vertex).

After successfully computing the set of partial solutions for a current node the algorithm runs a reduce function.
In the naive approach, it only deletes the duplicates by sorting set of partial solutions and checking
if the two consecutive are same or not. In rank-based approach it divides all partial solutions into buckets
(same as during processing the \textbf{join node}). For each bucket it computes the necessary matrix and performs
Gaussian elimination on it to get a representative set of partial solutions.

\subsubsection{Keeping track of partial solutions vs self-reducibility}\label{ss:self}

In the implementation, the core of the naive and rank-based approaches is the same. 
We use two variants of the implementations:
either \textbf{keep track of partial solutions} (so that the entire Hamiltonian cycle can be returned in the end)
  or, in order to save space, just remember a flag and a Hamiltonian cycle is found via \textbf{self-reducibility}.

To limit the effect of self-reducibility in case of the decision-only variant, we employ a problem-specific strategy.
That is, we discover the Hamiltonian cycle edge-by-edge. For a path $P$ in $G$ with at least two edges, we can discover if $G$ contains a Hamiltonian cycle containing $P$
by deleting from $G$ all edges of $E(G)\setminus E(P)$ that are incident to internal vertices of $P$, and run the decision algorithm on the obtained subgraph.
Given a path $P$, we extend it one by one by doing a binary search over the next edge incident to an endpoint of $P$.
This gives $\Oh(n \log \Delta)$ calls to the decision version of the problem for graphs with $n$ vertices and maximum degree $\Delta$.

\subsection{Cut\&Count approach}\label{ss:cut-and-count}

The main idea of the Cut\&Count approach~\cite{cygan2011solving} is to replace the search for a Hamiltonian cycle
with counting the following objects: a cycle cover of the graph (i.e., a subset of edges where every vertex is of degree exactly two)
with an assignment of every cycle to either left or right. In this manner, a fixed cycle cover with $c$ cycles is counted
$2^c$ times; if we additionally force one fixed vertex to be always on the left side, we get $2^{c-1}$ instead.
That is, every Hamiltonian cycle is counted once, and every other cycle cover is counted an even number of times.

In~\cite{cygan2011solving}, the Isolation Lemma~\cite{IsolationLemma} is employed to essentially reduce to the case when we solve instances with a unique Hamiltonian cycle. 
Then, the parity of the count described above indicates whether the graph contains a Hamiltonian cycle.
However, this approach introduces a large polynomial overhead in the running time bound:
first, because of the need for self-reducibility to discover the cycle (which we handle as in the previous section) 
  and, second, because of the use of Isolation Lemma that adds an additional ``weight'' dimension to the dynamic programming memoization tables.

For the second overhead, as discussed~\cite{cygan2011solving}, it can be remedied by, instead of using the Isolation Lemma,
pick a field $\mathbb{F}$ of characteristic $2$ (i.e., a field of size $2^p$ for some integer $p$), associate with each edge $e \in E(G)$ a variable $x_e$, associate with each cycle cover a monomial
being a product of the variables associated with the edges used in the cycle cover, and compute the sum of the monomials over all cycle covers and all left/right assignment,
using a random assignment of values from $\mathbb{F}$ to variables $x_e$.
Then, if $\mathbb{F}$ is large enough (larger than the maximum degree of the monomial, which is $n$), the Schwarz-Zippel lemma ensures that with good probability the result is nonzero if and only if the graph has a Hamiltonian cycle
(i.e., there is a small probability of a false negative).

In our implementation, we follow this path, using a field of size $2^{64}$. This size is large enough so that the failure probability is negligible.
On the other hand, there exists an efficient implementation of operations on this field using
the PCLMULQDQ processor instruction for multiplication. 
Our implementation of the field operations follow~\cite{bjorklund2014engineering}.

Furthermore, as discussed in the introduction,
  the choice of computations over $GF(2^{64})$ rather than arguably simpler counting algorithms via the Isolation Lemma resulted also in 
technical problems in handling \textbf{join nodes}. 
As observed in~\cite{cygan2011solving}, a natural and direct approach to a \textbf{join node} with bag of size $t$ runs in time $9^t t^{\Oh(1)}$.
In~\cite{cygan2011solving}, this is speeded up by an involved fast convolution approach, reducing the $9^t$ factor to $4^t$.
At heart of this approach lies an algorithm to quickly compute the following convolution.

Let $f,g : \mathbb{Z}_4^m \to R$ for some ring $R$ and integer $m$. 
We define $f \ast g : \mathbb{Z}_4^m \to R$ as
$$(f \ast g)(x) = \sum_{y \in \mathbb{Z}_4^m} f(y) g(x-y).$$
Here, the addition in $\mathbb{Z}_4^m$ is done coordinate-wise. 
\cite{cygan2011solving} developed a FFT-like approach to computing the above convolution, yielding the following.
\begin{lemma}[\cite{cygan2011solving}]\label{lem:conv}
Given $f,g : \mathbb{Z}_4^m \to \mathbb{Z}$, the convolution $f \ast g$ can be computed
in $4^m m^{\Oh(1)}$ operations on $\mathbb{Z}$ on values of the order of $2^{\Oh(m)}$ times larger than the maximum absolute value of the input functions.
\end{lemma}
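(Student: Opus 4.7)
The plan is to follow the classical FFT paradigm, adapted to the group $\mathbb{Z}_4^m$. I would first identify the convolution $f\ast g$ with multiplication in the group ring $R_m := \mathbb{Z}[\mathbb{Z}_4^m] \cong \mathbb{Z}[x_1,\ldots,x_m]/\langle x_1^4-1,\ldots,x_m^4-1\rangle$, encoding a function $h : \mathbb{Z}_4^m \to \mathbb{Z}$ as the polynomial whose monomial $x_1^{\alpha_1}\cdots x_m^{\alpha_m}$ carries coefficient $h(\alpha_1,\ldots,\alpha_m)$. The goal then reduces to multiplying two such polynomials in $4^m m^{\Oh(1)}$ integer operations.

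The key step is to put $R_m$ into a pointwise form in which multiplication becomes coordinate-wise. Over $\mathbb{C}$ this would just be the discrete Fourier transform using the 4th root of unity $i$; to stay inside $\mathbb{Z}$, I would instead use the Chinese remainder factorization $x^4-1 = (x-1)(x+1)(x^2+1)$, yielding the single-variable isomorphism $\mathbb{Z}[x]/(x^4-1) \cong \mathbb{Z} \times \mathbb{Z} \times \mathbb{Z}[x]/(x^2+1)$. The Gaussian-integer factor $\mathbb{Z}[x]/(x^2+1)$ is represented as an ordered pair of integers, so a ``transformed'' element of $R_m$ still fits in a $4^m$-long integer array of the same shape as the input.

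The algorithm then processes the variables one at a time, exactly as in a tensor-product radix-$4$ FFT. For each coordinate $i$, the array is viewed as $4^{m-1}$ ``rows'' of length four along the $i$-th direction, and the constant-size linear map realizing the single-variable CRT isomorphism is applied to every row. This costs $\Oh(4^m)$ integer additions per variable and $\Oh(m\cdot 4^m)$ in total; each pass inflates entries by a constant factor, so after all $m$ passes the values have grown by at most $2^{\Oh(m)}$. Once both $f$ and $g$ have been transformed, their product in $R_m$ reduces to $4^m$ coordinate-wise integer multiplications, and the inverse transform is applied variable by variable in the same way.

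The one genuinely subtle point is the inverse: the inverse of the single-variable CRT needs a division by $4$, so the full inverse would nominally require a division by $4^m$. Since the answer $f \ast g$ is integer-valued, I would postpone all such divisions to the very end; every intermediate value then stays in $\mathbb{Z}$, at the cost of an additional factor of $4^m$ in magnitude, which is absorbed into the $2^{\Oh(m)}$ bound in the statement. I expect this bookkeeping around divisions to be the only real obstacle; the rest is a direct specialization of the standard radix-$4$ FFT to the group algebra of $\mathbb{Z}_4^m$.
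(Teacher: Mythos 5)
Your reconstruction is correct and takes essentially the same route as the proof in \cite{cygan2011solving} (the paper itself only cites this lemma without reproving it): a coordinate-by-coordinate radix-$4$ transform over $\mathbb{Z}_4^m$ carried out in the arithmetic of $\mathbb{Z}[i]$, with pointwise multiplication in the transformed domain and the exact division by $4^m$ deferred to the very end so that all intermediate values remain integral and grow by only $2^{\Oh(m)}$. Your rational CRT factorization $x^4-1=(x-1)(x+1)(x^2+1)$ merely repackages evaluation at the four fourth roots of unity by merging the conjugate pair into the single factor $\mathbb{Z}[x]/(x^2+1)$, and you correctly isolate the deferred division by $4^m$ as the crux --- precisely the feature that makes the lemma inapplicable over characteristic $2$ and that Lemma~\ref{lem:conv2} is designed to circumvent.
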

However, the proof of the above lemma involves division by a factor of $4^m$, making it inapplicable directly to $R = GF(2^{64})$.
To circumvent this obstacle, we developed a new variant of Lemma~\ref{lem:conv}, building on the internal structure of the field
$GF(2^{64})$. Recall that a field $GF(2^p)$ can be defined as the ring $\mathbb{Z}[x]$ divided by the ideal generated by $2$
and an irreducible (in $\mathbb{F}_2[x]$) polynomial $Q$ of degree $p$.
\begin{lemma}\label{lem:conv2}
Let $p \geq 1$ and assume that the elements of field $GF(2^p)$ are given as polynomials from $\mathbb{F}_2[x]$ of degree less than $p$,
and multiplication in $GF(2^p)$ is done modulo a known polynomial $Q$ of degree $p$.
Given two function $f,g : \mathbb{Z}_4^m \to GF(2^p)$, the convolution $f \ast g$ can be computed
in time $4^m (pm)^{\Oh(1)}$.
\end{lemma}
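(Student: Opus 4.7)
The obstacle is exactly the one flagged before the statement: Lemma~\ref{lem:conv} as a black box is unusable over $GF(2^p)$ because its internal FFT-style inverse transform divides by $4^m$, which has no inverse in a field of characteristic $2$. The plan is to sidestep this by never doing the convolution in $GF(2^p)$ at all; instead, we lift everything to $\mathbb{Z}[x]$, compute there (where nothing has to be inverted), and only then project down to $GF(2^p) = \mathbb{F}_2[x]/(Q)$ as the final step.

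Concretely, represent each input value $f(v), g(v) \in GF(2^p)$ by its canonical polynomial of degree $< p$ over $\mathbb{F}_2$, and lift those to polynomials $\tilde f(v), \tilde g(v) \in \mathbb{Z}[x]$ with coefficients in $\{0,1\}$. Write $\tilde f(v) = \sum_{i=0}^{p-1} a_i(v)\, x^i$ and $\tilde g(v) = \sum_{j=0}^{p-1} b_j(v)\, x^j$, so that each $a_i, b_j : \mathbb{Z}_4^m \to \{0,1\} \subset \mathbb{Z}$. The key identity is
\[
(\tilde f \ast \tilde g)(w) \;=\; \sum_{k=0}^{2p-2} \Bigl(\sum_{i+j=k} (a_i \ast b_j)(w)\Bigr) x^k,
\]
which reduces the polynomial-valued convolution to $p^2$ scalar convolutions over $\mathbb{Z}$. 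Each of these can be handled directly by Lemma~\ref{lem:conv}; since the inputs are $0/1$, the intermediate values stay bounded by $2^{\Oh(m)}$, so arithmetic on them costs $\Oh(m)$ bit-operations. This step therefore runs in $p^2 \cdot 4^m m^{\Oh(1)} = 4^m (pm)^{\Oh(1)}$ time.

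To finish, for every $w \in \mathbb{Z}_4^m$ we assemble the $\le 2p-1$ integer coefficients into the polynomial $(\tilde f \ast \tilde g)(w) \in \mathbb{Z}[x]$ and output $(\tilde f \ast \tilde g)(w) \bmod 2 \bmod Q(x)$. Correctness is the observation that reduction mod $2$ is a ring homomorphism $\mathbb{Z}[x] \to \mathbb{F}_2[x]$ under which $\tilde f(v)\tilde g(v')$ maps to $f(v)g(v')$ (ordinary polynomial product in $\mathbb{F}_2[x]$), so the integer-world convolution maps to $\sum_v f(v) g(w-v)$ in $\mathbb{F}_2[x]$, and reducing modulo $Q$ turns this into the $GF(2^p)$-convolution we wanted. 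The mod-$2$ reduction is free; the mod-$Q$ reduction of a polynomial of degree $< 2p$ costs $\Oh(p^2)$ per output position, for $4^m p^2$ overall. Adding up the three stages gives $4^m (pm)^{\Oh(1)}$ as required, and the whole argument avoids division because every nontrivial computation happens in $\mathbb{Z}$.
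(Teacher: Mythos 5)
Your proof is correct and rests on the same central idea as the paper's: the division by $4^m$ inside Lemma~\ref{lem:conv} is only problematic in characteristic $2$, so one lifts the computation to $\mathbb{Z}[x]$, where the division is exact, and reduces modulo $2$ and modulo $Q$ only at the very end. The execution differs in one technical respect. The paper reruns the algorithm of Lemma~\ref{lem:conv} \emph{once}, with the ring $R$ taken to be $\mathbb{Z}[x]$, and then argues by inspection of that proof that all intermediate objects are polynomials of degree $\Oh(p)$ with $\Oh(m)$-bit coefficients. You instead keep Lemma~\ref{lem:conv} entirely as a black box over $\mathbb{Z}$: writing $\tilde f(v)=\sum_i a_i(v)x^i$ and $\tilde g(v)=\sum_j b_j(v)x^j$ with $0/1$ coefficient functions $a_i,b_j:\mathbb{Z}_4^m\to\mathbb{Z}$, you use the coefficient-wise identity $(\tilde f \ast \tilde g)(w)=\sum_k x^k\sum_{i+j=k}(a_i\ast b_j)(w)$ to reduce the problem to $p^2$ integer-valued convolutions, each invoking the lemma exactly as stated. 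Both give $4^m(pm)^{\Oh(1)}$; your version buys modularity (no need to re-inspect the internals of the cited algorithm), while the paper's single polynomial-valued pass avoids the explicit $p^2$ blow-up in the number of convolution calls and is closer to what one would actually implement. The correctness argument via the ring homomorphisms $\mathbb{Z}[x]\to\mathbb{F}_2[x]\to\mathbb{F}_2[x]/(Q)$ is sound and is exactly what the paper's terser proof implicitly relies on.
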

\begin{proof}
We follow the same algorithm as in the proof of Lemma~\ref{lem:conv} from~\cite{cygan2011solving}, but treating the values
of $f$ and $g$ as elements of $\mathbb{Z}[x]$, not $GF(2^{64})$. 
This allows the necessary division steps in the algorithm, and an inspection of the proof of~\cite{cygan2011solving}
shows that the algorithm operates on $\Oh(m)$-bit integers and polynomials of degree $\Oh(p)$.
Then, at the very end, we reduce every resulting polynomial modulo $2$ and modulo $Q$ to obtain elements of $GF(2^{64})$.
\end{proof}
However, in the above we need to depart from the efficient implementation of operations in $GF(2^{64})$, and explicitly
operate on polynomials in $\mathbb{Z}[x]$ of larger degree. While theoretically sound, this is expected to give a large
overhead in experiments. 
Consequently, we test two variants of the Cut\&Count algorithm: the one using a naive approach to the join nodes
in time $9^t t^{\Oh(1)}$ and the one using Lemma~\ref{lem:conv2}.

To conclude the proof of Theorem~\ref{thm:cc}, we observe that to ensure
correctness with constant probability, the Cut\&Count algorithm
of~\cite{cygan2011solving} requires field $GF(2^p)$ with $p = \Omega(\log n)$.

\section{Setup}\label{sec:setup}

\subsection{Hardware and code}

All of the computations of our implementations were performed on a PC with an Intel Core i5-6500 processor and
16 GB of random-access memory. The operating system used during the experiments was Arch Linux.
All implementations have been done in C++, the code is available at~\cite{fnp-webpage,our-repo}.

The baseline \texttt{hamiltionian\_cycle()} method of the SageMath package
has been run on a PC with an Intel Core i7-6700 processor and 32 GB of random-access memory, running Ubuntu 18.04. 
We have used version 8.8 of SageMath with Gurobi 8.1.1 as LP solver backend.

\subsection{Data sets}

To evaluate our algorithms we decided to use the well known set of 
\textsc{Hamiltonian Cycle} instances from
Flinders Hamiltonian Cycle
Project~\cite{Haythorpe18} consisting of 1001 instances.
To find tree decompositions of small width, we first applied our implementation of the
minimum fill-in heuristic (cf.~\cite{GaspersGJMR16}).
The heuristic returned tree decompositions of width at most $8$ for
623 instances, and indicated that 30 more instances may have treewidth within ranges
allowing usage of our \textsc{Hamiltonian Cycle} algorithms.

We took the aforementioned 623 instances as our main benchmark.
For sake of optimizing hyperparameters of our algorithms (i.e., the frequency of applying the Gaussian elimination step in the rank-based approaches), we sampled a~subset of 30 elements.

To the aforementioned 30 instances with larger but potentially
tractable treewidth, we applied
the heuristic of Ben Strasser~\cite{Strasser} that won the second place in 2017 PACE Challenge~\cite{dellpace}. This resulted in another 19 instances with tree decompositions of width between
17 and 29.
Out of these instances, 15 turned out to be tractable by our algorithms.

Furthermore, we also sampled 7 random instances in the following way: starting from a Hamiltonian cycle $C$, we added a 
number of random edges with endpoints arranged on the cycle $C$ so that the treewidth is bounded. These instances are meant to generate
many partial solutions at separator, and are expected to give large advantage to rank-based approaches.
More precisely, the 7 instances in set $E$ below are generated in the following way:
\begin{enumerate}
\item Pick two integers $a$ and $b$ with $a$ equal $2$ modulo $4$ and a probability threshold $0 < p < 1$.
\item Create $ab$ vertices $\{v_{i,j}~|~1 \leq i \leq a, 1 \leq j \leq b\}$.
\item For every $1 \leq i \leq a$, connect vertices $v_{i,j}$ for $1 \leq j \leq b$ into a path (with indices $j$ in the natural order).
\item For every $1 \leq i \leq a/2$, connect $v_{i, 1}$ with $v_{i + a/2, 1}$. For every even $2 \leq i \leq a$, connect 
$v_{i-1, b}$ with $v_{i, b}$. Note that at this point all vertices are connected in a single cycle due to the assumption
that $a$ equals $2$ modulo $4$. This cycle is a Hamiltonian cycle of the constructed graph. 
\item For every $1 \leq i < i' \leq a/2$ and $1 \leq j \leq b$, add edge $v_{i,j} v_{i',j}$ with probability $p$.
\end{enumerate}
Note that the above procedure generates graph with treewidth bounded by $a$
(for every $1 \leq r \leq ab - a$, create a bag $X_r = \{v_{s\mathrm{\ mod\ } a, \lceil s/a \rceil}~|~r \leq s \leq r + a\}$
 and make a tree decomposition of the constructed graph being a path with $ab-a$ vertices with bags $X_r$).

To sum up, we operate on five data sets, all but the last being subsets of
the Flinders Hamiltonian Cycle Project~\cite{Haythorpe18}:
\begin{description}
\item[set $A$] is the whole set of graphs with small treewidth recognized by our heuristic
  (623 instances, treewidth at most 8),
\item[set $B$] is a subsample of $A$ (30 instances, treewidth at most 8),
\item[set $C$] is the set of larger treewidth graphs with decompositions
found by \cite{Strasser} (19 instances, treewidth between 17 and 29).
\item[set $D$] is the subset of the set $C$ that turned out to be 
tractable by our implementations (15 instances, treewidth between 17 and 29).
\item[set $E$] is a set of $7$ random graphs sampled as described above.
\end{description}
All instances from~\cite{Haythorpe18} are available through their webpage.
At~\cite{fnp-webpage} we provide a list of the used instances in each set, the set $E$, 
and the used tree decompositions for sets $C$ and $D$.
See Tables~\ref{tb:stats} and~\ref{tb:statsCE} for basic statistics of the tests used.

\begin{table}[bth]
\begin{center}
\begin{tabular}{l||c|c|c|c||c|c|c|c||c|c|c|c||}
\multirow{2}{*}{set} & \multicolumn{4}{c||}{$|V(G)|$} & \multicolumn{4}{c||}{$|E(G)|$} & \multicolumn{4}{c||}{tree decomposition width} \\
& min & avg & med & max
& min & avg & med & max
& min & avg & med & max \\\hline
A & 66 & 2406.98 & 2224 & 8886 & 99 & 5246.36 & 3871 & 35018 & 4 & 7.21 & 8 & 8 \\
B & 286 & 3059.77 & 2988 & 6620 & 430 & 7022.33 & 4646 & 28718 & 5 & 7.67 & 8 & 8 \\
C & 462 & 2173.05 & 1578 & 9528 & 756 & 3380.16 & 2688 & 13968 & 17 & 23.53 & 25 & 29 \\
E & 360 & 534.29 & 600 & 700 & 566 & 922.14 & 886 & 1397 & 15 & 17.29 & 17 & 20 \\
\hline\hline
\multirow{2}{*}{set} & \multicolumn{4}{c||}{minimum degree} & \multicolumn{4}{c||}{average degree} & \multicolumn{4}{c||}{maximum degree} \\
& min & avg & med & max
& min & avg & med & max
& min & avg & med & max \\\hline
A & 2 & 2.83 & 3 & 4 
  & 3 & 3.95 & 3 & 8.73 
  & 3 & 159.61 & 4 & 1908 \\
B & 2 & 2.8 & 3 & 4 
  & 3 & 4.14 & 3 & 8.71 
  & 3 & 263.13 & 4 & 1488 \\
C & 2 & 2 & 2 & 2 
  & 2.93 & 3.20 & 3.17 & 3.47 
  & 3 & 69.16 & 8 & 192 \\
E & 2 & 2 & 2 & 2 
  & 2.95 & 3.61 & 3.64 & 4.54 
  & 6 & 7.43 & 8 & 8 \\
\hline\hline
\multirow{2}{*}{set} & \multicolumn{4}{c||}{girth} & \multicolumn{4}{c||}{diameter} & \multicolumn{4}{c||}{} \\
& min & avg & med & max
& min & avg & med & max
&  &  &  &  \\\hline
A 
& 3 & 3.29 & 3 & 5 
& 6 & 194.13 & 89 & 1113
&  &  &  & \\
B
& 3 & 3.33 & 3 & 5
& 6 & 235.83 & 159 & 828
&  &  &  &  \\
C 
& 3 & 3.95 & 4 & 4
& 10 & 24.26 & 18 & 92
&  &  &  &  \\
E 
& 3 & 3 & 3 & 3 
& 33 & 47.57 & 53 & 54 
&  &  &  &  \\
\hline\end{tabular}
\caption{Statistics of test sets. avg stands for average, med stands for median.}\label{tb:stats}
\end{center}
\end{table}

\subsection{Breakdown of tests depending on their source}\label{ss:breakdown}

According to the authors of the Flinders Hamiltonian Cycle Project dataset~\cite{Haythorpe18},
the instances in the dataset come from multiple sources. 
In our repository~\cite{our-repo}, we provide a breakdown of tests into the following categories:
\begin{description}
\item[\texttt{generalized\_petersen}] 
Generalized Petersen graphs with $n = 3 \mathrm{\ or\ } 5\mathrm{\ mod\ }6$,
            so precisely three Hamiltonian cycles exist~\cite{inst-gp} (202 tests).
\item[\texttt{flower\_snarks}] Flower snarks modified by the addition of a single edge to introduce approximately $2^{n/8}/3$ Hamiltonian cycles~\cite{inst-flower} (150 tests).
\item[\texttt{uniquely\_fleischner}] Uniquely Hamiltonian graphs, construction by Fleischner~\cite{inst-f} (50 tests).
\item[\texttt{uniquely\_at}] Uniquely Hamiltonian graphs, construction by Aldred and Thomassen~\cite{inst-at} (50 tests).
\item[\texttt{sheehan}] Maximally uniquely Hamiltonian graphs, construction by Sheehan~\cite{inst-sheehan} (11 tests).
\item[\texttt{combined\_fleischner}] Combined smaller graphs, mostly from \texttt{uniquely\_fleischner} category (371 tests). 
\item[\texttt{reduction\_domset}] Graphs obtained from a reduction from the Dominating Set problem (15 tests).
\item[\texttt{reduction\_nqueens}] Graphs obtrained from a reduction from the N-queens problem (25 tests).
\item[\texttt{reduction\_insanity}] Graphs obtrained from a reduction from from the Generalized Instant Insanity problem (60 tests).
\item[\texttt{reduction\_bellringing}] Graphs obtrained from a reduction from the Bellringing problem (8 tests).
\item[\texttt{reduction\_unium}] Graphs obtrained from a reduction from commercial videogame Unium~\cite{inst-unium} (45 tests).
\item[\texttt{other}] A few unclassified tests (14 tests).
\end{description}

\afterpage{%
\clearpage%
\begin{landscape}%
\begin{table}[bth]
\begin{center}
\begin{tabular}{l|l|c|c|c|c|c|c|c|c}
category &
test & $|V(G)|$ & $|E(G)|$ & min deg & avg deg & max deg & girth & diameter & tw \\\hline

\rotcat{1}{\texttt{reduction\_bellringing}} &
0074 & 462 & 756 & 2 & 3.27 & 5 & 3 & 13 & 28 \\
\rotcat{8}{\texttt{reduction\_insanity}} &
0109 & 606 & 933 & 2 & 3.08 & 7 & 4 & 17 & 17 \\
& 0110 & 606 & 925 & 2 & 3.05 & 7 & 4 & 18 & 17 \\
& 0144 & 804 & 1256 & 2 & 3.12 & 7 & 4 & 18 & 21 \\
& 0145 & 804 & 1252 & 2 & 3.11 & 8 & 4 & 18 & 21 \\
& 0172 & 1002 & 1575 & 2 & 3.14 & 9 & 4 & 19 & 25 \\
& 0173 & 1002 & 1579 & 2 & 3.15 & 8 & 4 & 18 & 25 \\
& 0199 & 1200 & 1902 & 2 & 3.17 & 8 & 4 & 18 & 29 \\
& 0200 & 1200 & 1902 & 2 & 3.17 & 8 & 4 & 18 & 26 \\
\rotcat{7}{\texttt{reduction\_unium}} &
0253 & 1578 & 2688 & 2 & 3.41 & 163 & 4 & 10 & 29 \\
& 0268 & 1644 & 2767 & 2 & 3.37 & 192 & 4 & 10 & 25 \\
& 0271 & 1662 & 2770 & 2 & 3.33 & 183 & 4 & 10 & 29 \\
& 0272 & 1662 & 2863 & 2 & 3.45 & 183 & 4 & 10 & 25 \\
& 0290 & 1770 & 3020 & 2 & 3.41 & 176 & 4 & 10 & 25 \\
& 0298 & 1806 & 3071 & 2 & 3.40 & 182 & 4 & 10 & 23 \\
& 0340 & 2010 & 3488 & 2 & 3.47 & 159 & 4 & 10 & 26 \\
\rotcat{3}{\texttt{other}} &
0703 & 4024 & 5900 & 2 & 2.93 & 3 & 4 & 62 & 19 \\
& 0989 & 7918 & 11608 & 2 & 2.93 & 3 & 4 & 80 & 19 \\
& 1001 & 9528 & 13968 & 2 & 2.93 & 3 & 4 & 92 & 18 \\
\rotcat{7}{generated by us} & 
E0001 & 360 & 566 & 2 & 3.14 & 7 & 3 & 33 & 16 \\
& E0002 & 600 & 886 & 2 & 2.95 & 7 & 3 & 54 & 18 \\
& E0003 & 700 & 1139 & 2 & 3.25 & 8 & 3 & 54 & 20 \\
& E0004 & 300 & 681 & 2 & 4.54 & 6 & 3 & 52 & 20 \\
& E0005 & 700 & 1397 & 2 & 3.99 & 8 & 3 & 53 & 17 \\
& E0006 & 600 & 1120 & 2 & 3.73 & 8 & 3 & 53 & 15 \\
& E0007 & 360 & 655 & 2 & 3.64 & 8 & 3 & 34 & 15 \\
\hline\end{tabular}
\caption{Basic statistics for tests from sets C and E.
Min/avg/max deg stands for minimum, average, and maximum degree, respectively,
 and tw stands for the width of the used tree decomposition.}\label{tb:statsCE}
\end{center}\end{table}%
\end{landscape}%
\clearpage%
}

The partition of tests into the above categories turned out to be highly aligned
with the partition depending on the treewidth:
\begin{itemize}
\item The set $A$ (treewidth at most 8) consists of 171 (out of 371) \texttt{combined\_fleischner} instances, all 150 \texttt{flower\_snarks} instances, all 202 \texttt{generalized\_petersen} instances, 
  all 50 \texttt{uniquely\_at} instances, and all 50 \texttt{uniquely\_fleischner} instances.
\item The set $C$ (treewidth between 17 and 29) consists of 
1 (out of 8) \texttt{reduction\_bellringing} instance,
8 (out of 60) \texttt{reduction\_insanity} instances,
7 (out of 45) \texttt{reduction\_unium} instances,
and 3 (out of 14) \texttt{other} instances.
\item The instances in the set $C$ but not in the set $D$ (i.e., intractable for our approaches)
are the 3 \texttt{other} instances and one \texttt{reduction\_unium} instance.
\end{itemize}

\subsection{Fine-tuning the frequency of Gaussian elimination}

As discussed in the introduction, in the rank-based approach the theoretical
running time bound is worse than the one of Cut\&Count approach partially due to the
need of applying Gaussian elimination on the set of partial solutions. 
It is expected that the Gaussian elimination would also take substantial part of time
resources in experiments. 

In theory, the Gaussian elimination step is applied whenever the size of the set of partial
solutions exceeds theoretical guarantees. However, in practice it seems reasonable
that sometimes it pays off to apply this computationally expensive step less often;
that is, allow the set of partial solutions to grow significantly beyond the theoretical bounds, 
and once in a while trim it at bulk with a single Gaussian elimination step.
This intuition has been supported by the results of Fafianie et al~\cite{fafianie2015speeding}
for the case of~\textsc{Steiner Tree}.

Consequently, we start our experiments with fine-tuning the frequency of Gaussian elimination
in both rank-based approaches we study. Since the width of the tree decomposition
can play substantial role in deciding the optimal frequency, we do it separately 
for sets $B$ and sets $C$.

In the next experiments, we use the optimum found frequencies
for the algorithms based on both rank-based approaches.
Note that the frequencies may differ between the low-treewidth regime (sets $A$ and $B$)
and the medium-treewidth regime (set $C$).

\subsection{Comparison of the approaches}

Having found the optimal frequency of the Gaussian elimination in the rank-based approaches,
we run all four algorithms on every test in sets $A$, $B$, and $C$ and compare results.
In set $C$, every run has a timeout of 30 minutes. 
In set $A$, the timeout equals 10 minutes.

\section{Results}\label{sec:results}

In our experiments, it quickly became apparent that the variant of the naive and rank-based approach
that stores all partial solutions (i.e., no self-reducibility; see Section~\ref{ss:self}) is significantly faster for small treewidth (sets $A$ and $B$),
while the self-reducibility one is faster for larger treewidth (sets $C$, $D$, and $E$). 
Thus, in what follows, we used the first one for experiments on small treewidth graph and the latter
for larger treewidth graphs.

\subsection{Fine-tuning the frequency of Gaussian elimination}

\subsubsection{Small treewidth}

Recall that in sets $A$ and $B$, the maximum size of the
bag in the decomposition is $9$. Consequently, in every state
$(b, E)$ the size of $b^{-1}(1)$ is at most $8$ (as it must be even).
The treatment of the states with $|b^{-1}(1)| \in \{0,2\}$ does not
use any of the involved rank-based techniques. 
Thus, we decided to separately fine-tune the frequency of applying
the Gaussian elimination step to buckets with $|b^{-1}(1)|$ of size
$4$, $6$, and $8$ each. 
More formally, for $\ell \in \{4,6,8\}$
we fix a threshold $\tau$ and, for fixed bucket $b$ with $|b^{-1}(1)| = \ell$
apply the Gaussian elimination step to the states $(b,E)$ only if the number
of these states is at least $\tau$.
While experimenting with one $\ell$, the threshold for another sizes remains
fixed.
We perform tests on set $B$ and report the total time used to find
Hamiltonian cycles in all instances.
The results are presented in Table~\ref{tb:fine1}.

\begin{table}[bth]
\begin{center}
\begin{tabular}{l|l|l|l|ll}
\multirow{2}{*}{$\ell$} & \multirow{2}{*}{$2^{\ell-1}$} & \multirow{2}{*}{$2^{\ell/2-1}$} & \multirow{2}{*}{$\tau$} & \multicolumn{2}{c}{Total running time on set $B$ (SS.ms)} \\
&&&& rank-based $4^t$ & rank-based $(2+\sqrt{2})^t$ \\\hline
\multirow{2}{*}{4} & \multirow{2}{*}{8} & \multirow{2}{*}{2} & 3 & 1910.968 & 1318.385 \\
&&& 4 & 1827.949 & 1569.542 \\\hline
\multirow{7}{*}{6} & \multirow{7}{*}{32} & \multirow{7}{*}{4} & 5 & 1901.457 & 1264.803\\
&&& 7 & 1915.583 & 1286.522\\
&&& 9 & 1960.515 & 1298.849\\
&&& 11 & 1890.034 & 1316.813\\
&&& 13 & 1876.483 & 1339.439\\
&&& 15 & 1889.843 & 1401.620\\
&&& 17 & 1923.244 & 1425.338\\\hline
\multirow{5}{*}{8} & \multirow{5}{*}{128} & \multirow{5}{*}{8} & 9 & 1896.748 & 1269.761\\
&&& 18 & 1899.633 & 1290.696\\
&&& 36 & 1996.629 & 1274.545\\
&&& 72 & 1925.507 & 1268.261\\
&&& 144 & 1863.837 & 1283.934\\
\hline\end{tabular}
\caption{Fine-tuning results for test set $B$. Note that the second and third columns correspond to compression guarantees of the two studied algorithms, respectively.}\label{tb:fine1}
\end{center}
\end{table}

\onlylipics{\vspace{-10mm}}

From the results, it seems that lowering the frequency of Gaussian elimination does not help neither of the approaches, and evidently worsened the case for the improved base algorithm and $\ell=4,6$.
Lowering the frequency of Gaussian elimination only helped a bit in the case of $\ell=6$ and $\tau=13$ for the (first) rank-based algorithm.

We see a number of good explanations for that. First, we think that case $\ell=8$ appeared very rarely in the computations, and thus the impact of fine-tuning it has negligible effect in the overall result. 

For the remaining cases, note that the matrices passed to the Gaussian elimination have at most $32$ columns in the case of the first algorithm, and only $4$ columns in the second. Thus, the Gaussian elimination step is very cheap in this regime of $\ell$. Consequently, one does not gain much from lowering the frequency, while evidently losing by needing to maintain bigger memoization tables. This explains the worsening of the second algorithm for $\ell=4,6$ and increasing $\tau$.

However, one would expect that the first algorithm would also worsen with the increase of $\tau$, but this is not supported by data. To explain this behavior, note that the values of $\tau$ used here are lower than the theoretical guarantees of the algorithm. Consequently, the pruning of the memoization tables in the first algorithm seem to give very little in these cases. 

In other words, the pruning capabilities of the vectors $v_E$ used by the first algorithm are much weaker for low values of $\ell$ than the capabilities of the second algorithm. This is most striking in the case $\ell=4$: there are $3$
pairings of a $4$-element set; the first algorithm keeps all of them if present,
while the second one notices that one is redundant and deletes it.

To sum up, the data indicates that decreasing the frequency of the Gaussian elimination step does not help for small values of $\ell$, while the first algorithm
with the worse pruning capabilities does not offer much pruning in this
regime of values of $\ell$.

\subsubsection{Larger treewidth}

For fine-tuning in graphs of larger treewidth, we use set $D$. 
Here, we propose slightly different threshold behavior: we fix a parameter
$\alpha$ and, for fixed bucket $b$ with $\ell = |b^{-1}(1)|$,
  we apply the Gaussian elimination step
if the number of states $(b,E)$ exceed $\alpha \cdot 2^{\ell/2-1}$
(i.e., $\alpha$ is a multiplicative parameter relative to the pruning size
 guarantee of the second algorithm). 
The results are gathered in Table~\ref{tb:fine2}.

\begin{table}[bth]
\begin{center}
\begin{tabular}{l|ll||l|ll}
\multirow{2}{*}{$\alpha$} & \multicolumn{2}{c}{Total running time on set $D$ (SS.ms)} & \multirow{2}{*}{$\alpha$} & \multicolumn{2}{c}{Total running time on set $D$ (SS.ms)} \\
& rank-based $4^t$ & rank-based $(2+\sqrt{2})^t$ & 
& rank-based $4^t$ & rank-based $(2+\sqrt{2})^t$ \\\hline
0.5  & 7363.078 & 2021.516 & 32   & 1802.851 & 1778.647\\
1    & 2704.165 & 1801.278 & 64   & 1797.416 & 1807.470\\
2    & 1925.618 & 1768.000 & 128  & 1794.877 & 1801.913\\
4    & 1813.478 & 1779.293 & 256  & 1801.104 & 1822.113\\
8    & 1792.872 & 1788.217 & 512  & 1795.312 & 1818.508\\
16   & 1806.994 & 1783.919 & 1024 & 1800.863 & 1800.698\\
\hline\end{tabular}
\caption{Fine-tuning results for test set $D$.
  The Gaussian elimination step is applied to buckets $b$ with $\ell = |b^{-1}(1)|$
    and at least $\alpha \cdot 2^{\ell/2-1}$ states $(b,E)$.}\label{tb:fine2}
\end{center}
\end{table}

\onlylipics{\vspace{-10mm}}

The results indicate that a mild increase of the threshold (i.e., $\alpha = 2$)
  increases the speed of the second algorithm, while further increase
  of the threshold slowly worsens the bounds. 
  For the first algorithm, the sweet spot seems to be slightly later,
  and further increase of the threshold does not necessarily worsen the algorithm.

The gain from mild increase in the case of both algorithms can be explained
by the fact that for larger values of $\ell$, the Gaussian elimination step
starts to be costly. In the case of the first algorithm, we think that its
pruning capabilities are limited for the Hamiltonian cycle problem, and thus
further increase of the threshold does not change much. 

To sum up, both algorithms definitely slow down if the Gaussian elimination
step is done too frequently. The data showed optimum values $\alpha=8$
for the first algorithm and $\alpha=2$ for the second.

\subsection{Comparison}

As discussed in Section~\ref{ss:cut-and-count}, we have implemented
two variants of the Cut\&Count algorithm: the one that uses the
fast convolution at \textbf{join nodes} (Lemma~\ref{lem:conv2})
  and the one that does it more naively in time bounded by
  $9^t t^{\Oh(1)}$.

We found out that the one with the fast convolution behaves very slowly
even on small tests. This can be easily explained by the hidden complexity
of ring computations inside Lemma~\ref{lem:conv2}. Consequently, while
theoretically sound, we dropped it from further experiments
and considered only the Cut\&Count algorithm without the fast convolution.

For test set A, we have used a timeout of $10$ minutes per instance. 
A CSV file with full results can be found on the project website~\cite{fnp-webpage}.
Table~\ref{tb:cmp1} presents a summary.
The Cut\&Count algorithm did not finish in time for $124$ tests,
and thus we compare its running time on the other $499$ tests.
The SageMath baseline algorithm finished $188$ tests within the time limit (all tests finished by SageMath
were also finished by the Cut\&Count algorithm).
For sets C and E, full results are in Table~\ref{tb:cmp2} (for set E only naive and improved rank-based algorithms were executed).

\begin{table}[bth]
\begin{center}
\begin{tabular}{l|lllll}
& naive & rank-based $4^t$ & rank-based $(2+\sqrt{2})^t$ & Cut\&Count & Sage \\\hline
188 tests & 1591.500 & 1869.004 & 1869.172 & 6019.928 & 14338.490 \\
499 tests & 5993.633 & 7383.249 & 5919.392 & 46650.101 & - \\
all tests &  11532.153 & 13675.58 & 10278.827 & - & - \\
\hline\end{tabular}
\caption{Total running times for test set $A$ (timeout 10 minutes per instance). The Cut\&Count program did not finish within allotted time on $124$ instances,
  Sage finished within allotted time on only 188 instances, all of these instances were solved by the Cut\&Count program. 
  All other programs solved all test cases. 
The first row shows the total running time on 188 instances solved by all programs.
The second row shows the total running time on 499 instances solved by the Cut\&Count program.
The last row shows the total running time on all instances.}\label{tb:cmp1}
\end{center}
\end{table}

\onlylipics{\vspace{-10mm}}

\afterpage{%
\clearpage%
\begin{landscape}%
\begin{table}[bth]
\begin{center}
\begin{tabular}{l|l|lll|llll|l}
\multirow{2}{*}{category} & 
\multirow{2}{*}{test} & \multirow{2}{*}{$|V(G)|$} & \multirow{2}{*}{$|E(G)|$} & \multirow{2}{*}{tw} &
\multirow{2}{*}{naive} & rank-based & rank-based & Cut \& & \multirow{2}{*}{Sage}\\
& & & & & & \qquad$4^t$ & $(2+\sqrt{2})^t$ & Count & \\\hline
\rotcat{1}{\texttt{reduction\_bellringing}} &
0074 & 462 & 756 & 28 & 38.737 & 109.655 & 110.040 & - & 1.32 \\
\rotcat{8}{\texttt{reduction\_insanity}} &
0109 & 606 & 933 & 17 & .063 & .086 & .085 & .611 & 1.07 \\
& 0110 & 606 & 925 & 17 & .066 & .089 & .090 & .471 & 1.15\\
& 0144 & 804 & 1256 & 21 & .190 & .253 & .231 & 205.128 & 1.64 \\
& 0145 & 804 & 1252 & 21 & .137 & .187 & .186 & 3.549 & 1.36 \\
& 0172 & 1002 & 1575 & 25 & 1.156 & 1.298 & .554 & - & 1.75\\
& 0173 & 1002 & 1579 & 25 & .459 & .598 & .475 & 215.115 & 1.62 \\
& 0199 & 1200 & 1902 & 29 & 13.513 & 15.419 & 3.369 & - & 3.86 \\
& 0200 & 1200 & 1902 & 26 & 3.673 & 6.900 & 1.544 & - & 2.36 \\
\rotcat{6}{\texttt{reduction\_unium}} &
0253 & 1578 & 2688 & 29 & 93.343 & 167.458 & 167.440 & - & 6.19\\
& 0268 & 1644 & 2767 & 25 & 36.449 & 70.157 & 69.111 & - & 1.95\\
& 0271 & 1662 & 2770 & 29 & 28.149 & 33.145 & 33.208 & - & 3.15\\
& 0272 & 1662 & 2863 & 25 & 554.271 & 1260.329 & 1230.722 & - & 12.26\\
& 0290 & 1770 & 3020 & 25 & 57.901 & 83.781 & 82.386 & - & 89.84\\
& 0298 & 1806 & 3071 & 23 & 10.035 & 18.611 & 18.492 & - & 25.61\\
\rotcat{4}{generated by us} & 
E0001 & 360 & 566 & & 371.775 & - & 64.390 & - & .95\\
& E0002 & 600 & 886 & & 204.197 & - & 28.882 & - & 1.00\\
& E0003 & 700 & 1139 & & - & - & 711.778 & - & 1.50\\
& E0007 & 360 & 655 & & 1575.475 & - & 328.191 & - & 1.17\\
\hline\end{tabular}
\caption{Running times for test sets $C$ and $E$. Hyphen means timeout (30 minutes).
  For the set $C$, the ``tw'' column indicates the width of the used tree decomposition
    (found by the algorithm of Strasser~\cite{Strasser}). 
  Tests where all our implementations were time-outed are not presented here.}\label{tb:cmp2}
\end{center}
\end{table}%
\end{landscape}%
\clearpage%
}

The first corollary from the results is that the Cut\&Count approach does not turn out to be practical, and is 
heavily outperformed by other approaches. 
We see some good explanations for that.
First, all other approaches are ``positive-driven'': they keep only values in their memoization tables 
that correspond to found partial solutions, and in many cases there can be much fewer such partial solutions that the worse-case theoretical bound.
In particular, these approaches can implicitly use some hidden structure of the input graph, such as planarity. 
The Cut\&Count approach, on the other hand, relies on computing coefficients for partial cycle covers, and --- even with our positive-driven implementation
that keeps only nonzero elements --- keeps track of much more partial solutions than the other approaches.
This effect is even stronger if one tries to use fast convolution at \textbf{join nodes}: the convolution fills up the entire table of $4^t$ values
being polynomials, even if the input functions were sparse.

Second, the Cut\&Count approach solves only a decision version of the problem, yielding large overhead from some self-reducibility application,
  while all other algorithms return the Hamiltonian cycle in question straight away.

For the other approaches, it is noticeable that the first rank-based approach (with $4^t$ guarantee on the size of the memoization tables) is clearly
outperformed by the naive approach. That is, the cost of the Gaussian elimination step does not pay back in savings of the size of memoization tables.
This can be explained as already discussed in the previous section: the vectors used in this algorithm are too weak to effectively prune the memoization tables,
which is particularly visible on buckets $b$ with small $\ell = |b^{-1}(1)|$.

Results from small treewidth graphs (set $A$) show also that the improved rank-based approach outperforms the naive one by roughly $10\%$.
For larger treewidth (set $C$), the situation is more complicated: on some tests the rank-based approach outperforms the naive one
by significant factor (0172, 0199, 0200), while sometimes it is opposite (0074, 0272). 
The generated random instances (set $E$) gave big advantage to the rank-based approach, as expected as that was the main purpose in their design.

A natural question is why we see only $10\%$ increase despite significant asymptotic gain in the analysis ($2^{\Oh(t \log t)}$ vs $(2+\sqrt{2})^t)$. 
Apart from the obvious answers to this questions (the values of $t$ we are studying are low for asymptotic analysis), we would like to point out another,
problem-specific reason. The difference between the naive approach and the rank-based one is only within handling states for one fixed bucket $b$,
and there are up to $3^t$ different buckets. Iterating over all non-empty buckets is a common part of both approaches, and can be responsible for most
of their running time.

Comparing with the baseline SageMath algorithm, the treewidth-based approaches are clearly superior on small treewidth instances
(set $A$). 
This should be expected: the treewidth-based algorithms (except for Cut\&Count) have linear dependency on the graph size
in the running time bound while the exponential dependency on treewidth gives still moderately small constant for the values
of treewidth in the set $A$. On the other hand, for larger graphs in the set $A$, the SageMath method runs in exponential time in the graph
size in the worst case.

The advantage of treewidth-based algorithms disappears on the set $C$ where graphs have larger treewidth.
Here, 
the SageMath method usually outperforms our implementations or, in the other cases, is only mildly slower.
However, recall that the SageMath method 
has been run on a different, slightly stronger machine than our implementations, so we refrain from a more detailed comparison
of our implementations with the baseline SageMath method on the set $C$.

Figure~\ref{plot1} shows breakdown of running times on set $A$ (with 600s on top meaning time limit exceeded)
with regards to various graph parameters discussed in the previous section: number of vertices, number of edges, diameter, treewidth,
 average, and maximum degree. The breakdown with regards to minimum degree and girth
 turned out to carry very little information and is omitted. 
Clear (and expected) increase in the running time with the increase of the graph size (number of vertices, number of edges) is visible.
Note that the algorithms with linear-time dependency on the graph size (naive and both rank-based approaches) have this dependency visible in the plots.
There is a similar dependency on diameter, but most likely it is just the same correlation (i.e., with the graph size), as larger graphs tend to have larger diameter. 
The same comment applies to the average degree plot.
The plot with treewidth on the x-axis shows clearly that the running time of the treewidth-based algorithm explodes exponentially with the treewidth of the graph.
Finally, there does not seem to be any clear message given by the breakdown by maximum degree.

Figure~\ref{plot2} shows the running time on set $A$ (again, with 600s on top meaning time limit exceeded) as a function of the number of vertices, split into test categories as discussed
in Section~\ref{ss:breakdown}.
The plots clearly indicate that \texttt{uniquely\_fleischner} and \texttt{uniquely\_at} instances
are very simple for the treewidth-based approaches.
The instances from categories \texttt{flower\_snarks} and \texttt{generalized\_petersen} 
are also solved efficiently by the naive and rank-based approaches. 
Finally, \texttt{combined\_fleischner} instances can be quite challenging even for the best of
our implementations.
Note that \texttt{combined\_fleischner} is a very wide category, with diverse instances formed via combining subinstances from multiple sources.

Since the set $A$ includes all instances from categories
\texttt{uniquely\_fleischner}, \texttt{uniquely\_at}, 
\texttt{flower\_snarks}, and \texttt{generalized\_petersen}
and these instances have been solved quickly even by the naive approach,
we conclude that instances from these categories should no longer be considered 
as ``difficult'' for the Hamiltonian cycle problem.

To sum up, the only approach competitive with the naive approach is the improved rank-based approach with the $(2+\sqrt{2})^t$ guarantee on the size
of memoization tables. However, its gain is limited, and there are multiple cases where the use of Gaussian elimination steps is not helpful at all.
The treewidth-based algorithms greatly outperform the generic solver from SageMath on small treewidth instances.

\afterpage{%
\clearpage%
\begin{landscape}
\begin{figure}[h]%
\centering%
\includegraphics[width=.3\linewidth]{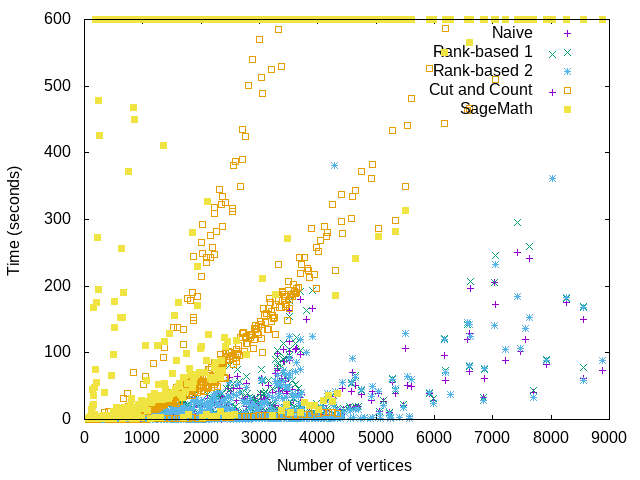}\quad%
\includegraphics[width=.3\linewidth]{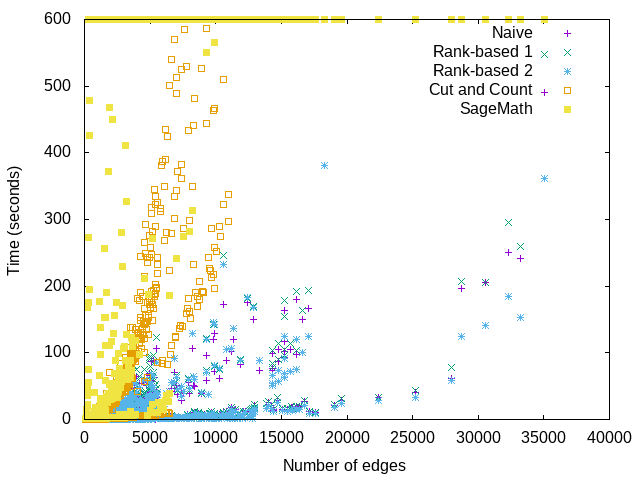}\quad%
\includegraphics[width=.3\linewidth]{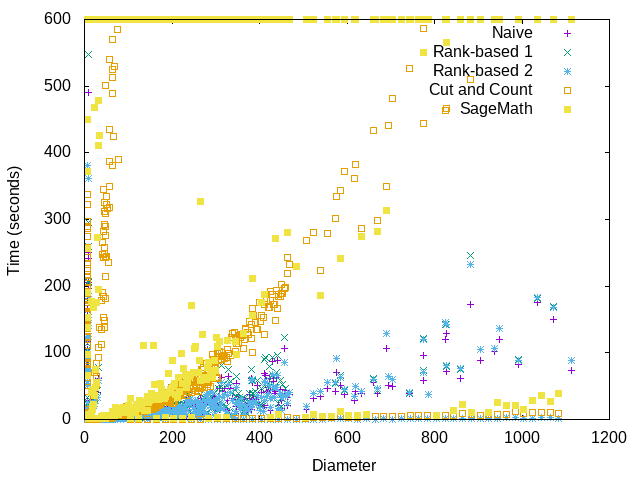}\\%
\includegraphics[width=.3\linewidth]{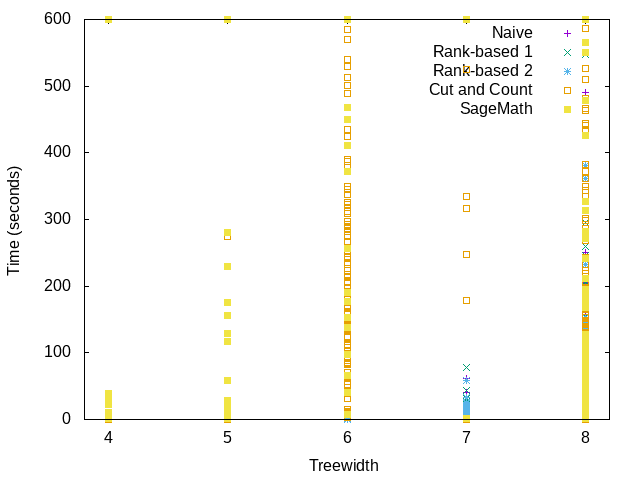}\quad%
\includegraphics[width=.3\linewidth]{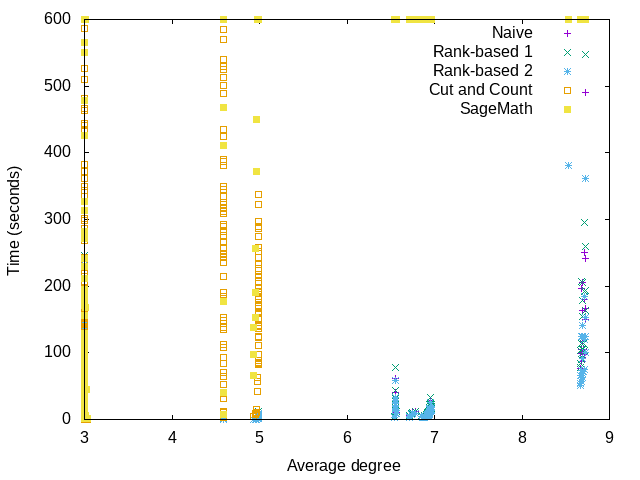}\quad%
\includegraphics[width=.3\linewidth]{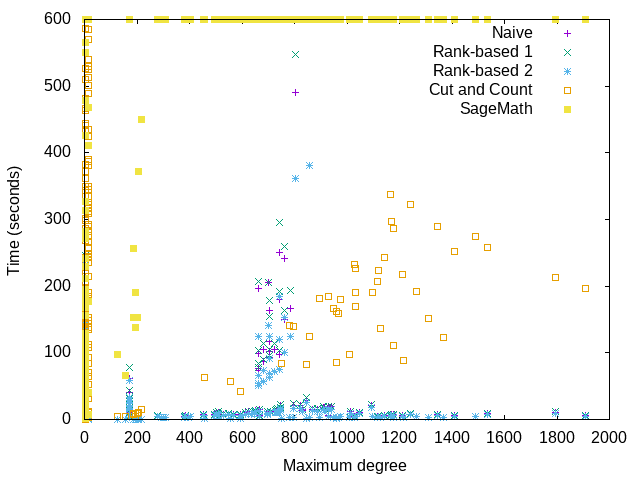}%
\caption{Plots of running times in set $A$ grouped by number of vertices, number of edges, diameter, treewidth, average degree, and maximum degree, respectively.
The corresponding plots for minimum degree and girth carry very little information and are not included.}\label{plot1}\vspace*{-.5cm}
\end{figure}%
\end{landscape}%
\clearpage%
}

\afterpage{%
\clearpage%
\begin{landscape}
\begin{figure}[h]%
\centering%
\includegraphics[width=.3\linewidth]{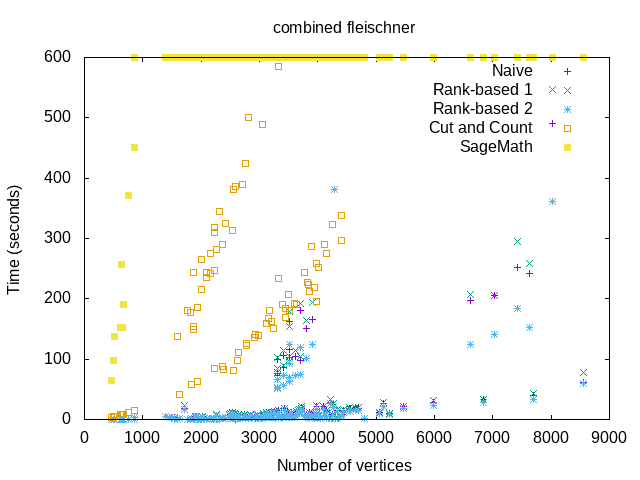}\quad%
\includegraphics[width=.3\linewidth]{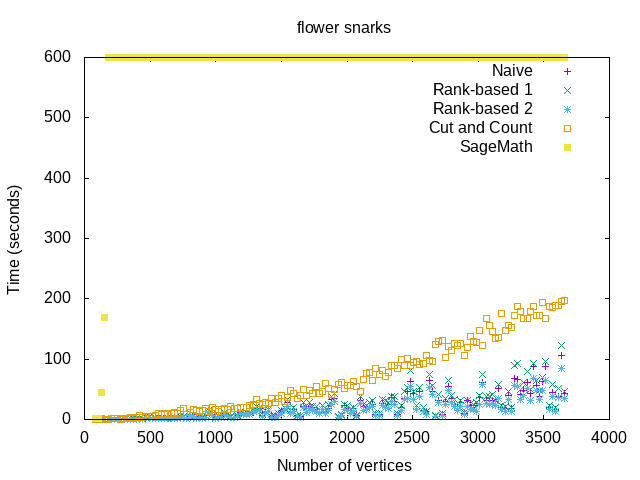}\quad%
\includegraphics[width=.3\linewidth]{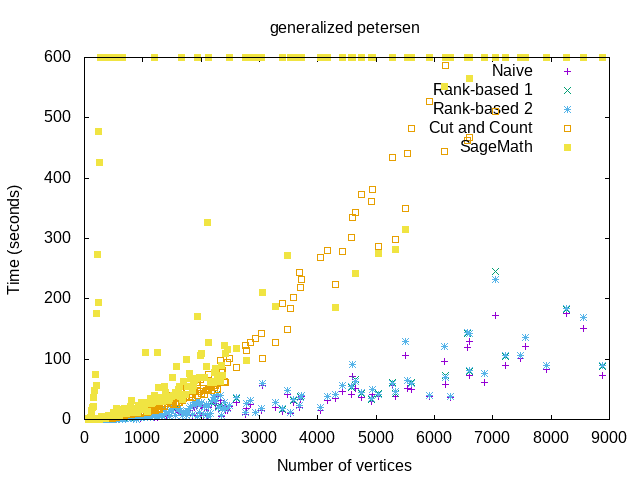}\\%
\includegraphics[width=.3\linewidth]{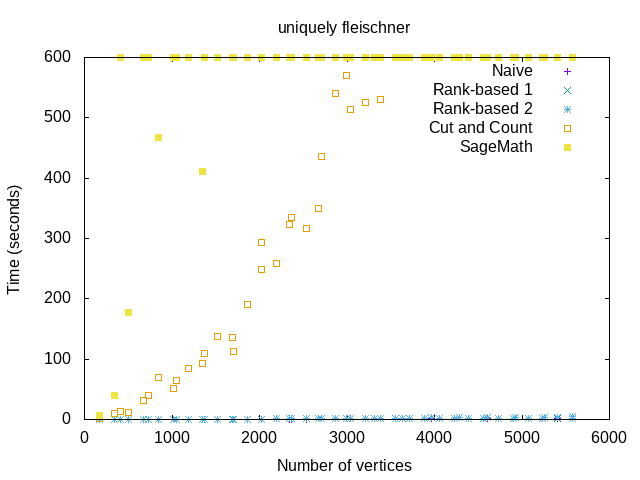}\quad%
\includegraphics[width=.3\linewidth]{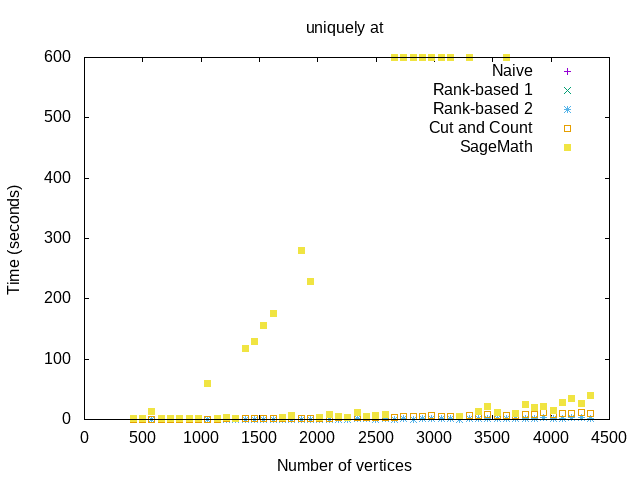}\\%
\caption{Plots of running times in subsets of set $A$, corresponding to \texttt{combined\_fleischner}, \texttt{flower\_snarks}, \texttt{generalized\_petersen}, \texttt{uniquely\_fleischner}, and \texttt{uniquely\_at} categories, respectively.}\label{plot2}\vspace*{-.5cm}
\end{figure}%
\end{landscape}%
\clearpage%
}

\section{Conclusions}\label{sec:conc}

We have experimentally evaluated multiple known approaches to solve \textsc{Hamiltonian Cycle} in graphs
of bounded treewidth. 
The results show that the Cut\&Count approach is impractical, while the improved rank-based approach of~\cite{cygan2013fast} consistently outperforms the
more generic one of~\cite{bodlaender2012solving}. Furthermore, the latter
seem to help little and is outperformed by the naive solution.

The comparison between the naive solution and the improved
rank-based one of~\cite{cygan2013fast} is more intricate. On graphs of small
treewidth, the second one outperforms the first one by $10\%$ margin.
For larger treewidth, the results are rather indecisive.

The results indicate potential in the improved rank-based algorithm
of~\cite{cygan2013fast} and point to the need of further theoretical
study of this approach. 
In~\cite{cygan2013fast}, the authors show how to perform pruning
without the need of Gaussian elimination at \textbf{introduce/forget nodes}.
The question of matching the $(2+\sqrt{2})^t t^{\Oh(1)}$ running time
bound for \textbf{join nodes} remains open, and a positive answer
to this question may lead to significantly faster implementation.
Also, we did not try to mix the Gaussian elimination steps
at \textbf{join nodes} with the other steps at \textbf{introduce/forget nodes}.

Finally, we found it quite remarkable that 638 out of 1001 instances
of Flinders Hamiltonian Cycle Challenge~\cite{Haythorpe18} (i.e., our sets $A$
    and $D$) could be solved with 
the naive bounded treewidth routine on a personal computer.
In particular, according to the Flinders Hamiltonian Cycle Challenge results~\cite{Haythorpe18},
such a routine would score a second place in the competition.
Furthermore, we were surprised that over $60\%$ (623 tests from our set $A$) have one-digit treewidth.

\paragraph{Acknowledgements.} This work is a full version of an extended abstract that appeared at SEA 2018~\cite{ZiobroP18}.

We are very grateful to an anynomous reviewer at ACM JEA whose remarks greatly helped us improving the manuscript.
We also thank Michael Haythorpe for assistance in preparing the breakdown of tests
into categories in Section~\ref{ss:breakdown}.

\bibliographystyle{abbrv}
\bibliography{A}

\end{document}